 \newtheorem{thm}{Theorem}[section]
 \newtheorem{prop}[thm]{Proposition}
 \theoremstyle{definition}
 \newtheorem{defn}[thm]{Definition}
 \theoremstyle{remark}
 \numberwithin{equation}{section}
\newcommand{\Pin}{\mathop{\mathrm{Pin}}}
\newcommand{\Spin}{\mathop{\mathrm{Spin}}}
\newcommand{\Cl}{\mathop{\mathrm{Cl}}}
\begin{document}
	\bibliographystyle{plain} 
\small
%
%
%
%
%
%
%
%
%

\title[Group theory: Coxeter, conformal and modular groups]
 {Clifford algebra is the natural framework for root systems and Coxeter groups. Group theory: Coxeter, conformal and modular groups}

\author[Pierre-Philippe Dechant]{Pierre-Philippe Dechant}

\address{%
Department of Mathematics \\ University of York \\  Heslington, York, YO10 5GG \\ United Kingdom}

\email{ppd22@cantab.net}


\subjclass{Primary 51F15, 20F55; Secondary 15A66, 52B15}
 
\keywords{Clifford algebras, Coxeter groups, root systems, group theory, representations, spinors, binary polyhedral groups, exceptional phenomena,  Trinities, McKay correspondence, conformal group, modularity}

\date{January 29, 2015}

\begin{abstract}
	In this paper, we make the case that Clifford algebra is the natural framework for root systems and reflection groups, as well as related groups such as the conformal and modular groups:  
	The metric that exists on these spaces can always be used to construct the corresponding Clifford algebra. 
	Via the Cartan-Dieudonn\'e theorem all the transformations of interest can be written as products of reflections and thus via `sandwiching' with  Clifford algebra multivectors. 
	These multivector groups can be used to perform concrete calculations in different groups, e.g. the various types of polyhedral groups, and we treat the example of the tetrahedral group $A_3$ in detail. 
	As an aside, this gives a constructive result that induces from every 3D root system a root system in dimension four, which hinges on the facts that the group of spinors provides a double cover of the rotations, the space of 3D spinors has a 4D euclidean inner product, and with respect to this inner product the group of spinors can be shown to be closed under reflections.
	In particular the 4D root systems/Coxeter groups induced in this way are precisely the exceptional ones, with the 3D spinorial point of view also explaining their unusual automorphism groups. This construction simplifies Arnold's trinities and puts the McKay correspondence into a wider framework. 
	We finally discuss extending the conformal geometric algebra approach to the 2D conformal and modular groups, which could have interesting novel applications in conformal field theory, string theory and modular form theory.
\end{abstract}

\maketitle
\section{Introduction}\label{sec_intro}

In previous work \cite{Dechant2012AGACSE, Dechant2012CoxGA, Dechant2012Induction, Dechant2013Platonic} we have commented on the usefulness of a Clifford algebra approach to reflection groups, and on the complementarity of these two  approaches. This was mostly due to the uniquely simple prescription for performing reflections in Clifford algebra, via a sandwiching procedure.  The Cartan-Dieudonn\'e theorem extends this to all orthogonal transformations by allowing to express them as products of reflections, and due to the homomorphism between $C(p,q)$ and $SO(p+1, q+1)$ this  in fact also extends to all conformal transformations. 

Here we argue that the Clifford framework is in fact the natural framework for root systems and reflection/Coxeter groups. In order to perform a reflection in the vector space in question, a metric is used to calculate the vector component parallel to the vector normal to the reflection hyperplane, which gets reversed under a reflection in that hyperplane. Thus, one can always use this metric on the vector space to construct the corresponding Clifford algebra. This setup then has the advantages of the approach above. The main route of interest in these structures is from a Lie theoretic perspective. A continuous Lie group, which is of interest for many local/gauge symmetry principles in high energy and gravitational physics as well as other areas, leads one to consider the Lie algebra of infinitesimal transformations. These Lie algebras usually have a triangular decomposition into the Cartan subalgebra (the quantum numbers) and the positive and negative roots (the creation and annihilation operators) of a crystallographic root system (which generates the Weyl group of the root lattice) on which the Killing form always provides a metric. One can thus always use this Killing form to construct the corresponding Clifford algebra. We therefore point out that it is more economical to employ this Clifford algebra structure (which is implicit anyway), rather than simply using the vector space and inner product structures separately. 

As an illustration of this method, we treat the example of the tetrahedral group associated with the root system $A_3$. This is particularly interesting, as it does not contain the inversion, which has led to a lot of confusion in the literature. Following the Cartan-Dieudonn\'e versor approach, we will demonstrate concrete group theoretic calculations for all the different types of polyhedral groups -- spinor, pinor, rotation and reflection -- as groups of multivectors (versors) in the Clifford algebra. 

Since the spinors in 3D have a 4D euclidean structure one gets as a by-product a remarkable result relating root systems in three and four dimensions: using a Clifford spinor construction one can construct -- starting from any root system in 3D -- a 4D root system, which generates a corresponding 4D reflection group. The 4D root systems that arise in this way are precisely the exceptional root systems in four dimensions, and their unusual symmetries can also be easily understood only from the 3D spinorial point of view. These results and the Clifford spinor construction do not appear to be known, and also present a more general context for the McKay correspondence. 

Via the homomorphism between $C(p,q)$ and $SO(p+1, q+1)$, all conformal transformations can in fact also be generated via reflections in a space of signature $(p+1, q+1)$, and the modular group, which is a subgroup of the 2D conformal group, can also be generated in this framework. We are  interested in conformal field theory and string theory, in particular in relation to string theory on the torus, modularity and Moonshine phenomena \cite{gannon2006moonshine, eguchi2011notes}, and therefore present a CGA construction of the 2D conformal and modular groups, which could have interesting novel applications.

 This paper is organised as follows. 
Section \ref{sec_Cox} introduces the framework of root systems, reflection and Coxeter groups, as well as their graphical representation. 
In Section \ref{sec_versor}, we present the Clifford algebra background, in particular a versor formalism for the Cartan-Dieudonn\'e theorem, by which the  full, chiral, binary and pinor polyhedral groups are all easily generated and treated within the same framework. 
In Section \ref{sec_spin}, we discuss the example of the root system of $A_3$, restricting ourselves to the even subgroups (spinors and rotations) for simplicity, and show how one can perform group theoretical calculations in this way; in particular, we discuss the conjugacy classes.
The spinors in 3D have a 4D euclidean norm, which yields a remarkable theorem which induces from every 3D root system a 4D root system in a constructive way (Section \ref{sec_bin}). We discuss this construction, the peculiar symmetries of the resulting root systems, and how this approach puts the McKay correspondence into a wider framework. 
In Section \ref{sec_pin}, we revisit the example of $A_3$ and treat the whole pinor and reflection groups in the versor formalism -- $A_3$ is the most salient example in this regard.
In Section \ref{sec_CGA}, we briefly outline how to extend the conformal geometric algebra treatment to the conformal and modular groups, which might have interesting new applications. 
We conclude with a summary and possible further work in Section \ref{sec_concl}.

\section{Root systems and reflection groups}\label{sec_Cox}

In this section, we introduce reflection groups via their generating root systems, which are sets of (root) vectors -- i.e. they can be interpreted as polytopes. Although perhaps unfamiliar, it is in fact a very simple concept, encapsulated in two straightforward axioms.

\begin{defn}[Root system] \label{DefRootSys}
A \emph{root system} is a collection $\Phi$ of non-zero  vectors $\alpha$ (called root vectors)  spanning an $n$-dimensional Euclidean vector space $V$ endowed with a positive definite bilinear form, which satisfies the  two axioms:
\begin{enumerate}
\item $\Phi$ only contains a root $\alpha$ and its negative, but no other scalar multiples: $\Phi \cap \mathbb{R}\alpha=\{-\alpha, \alpha\}\,\,\,\,\forall\,\, \alpha \in \Phi$. 
\item $\Phi$ is invariant under all reflections corresponding to root vectors in $\Phi$: $s_\alpha\Phi=\Phi \,\,\,\forall\,\, \alpha\in\Phi$, with 
the reflection $s_\alpha$ in the hyperplane that $\alpha$ is normal to given by $s_\alpha: \lambda\rightarrow s_\alpha(\lambda)=\lambda - 2\frac{(\lambda|\alpha)}{(\alpha|\alpha)}\alpha$, where $(\cdot \vert \cdot)$ denotes the inner product on $V$.
\end{enumerate}
\end{defn}

For a crystallographic root system, a subset $\Delta$ of $\Phi$, called \emph{simple roots} $\alpha_1, \dots, \alpha_n$, is sufficient to express every element of $\Phi$ via $\mathbb{Z}$-linear combinations with coefficients of the same sign. 
$\Phi$ is therefore  completely characterised by this basis of simple roots. In the case of the non-crystallographic root systems $H_2$, $H_3$ and $H_4$, the same holds for the extended integer ring $\mathbb{Z}[\tau]=\lbrace a+\tau b| a,b \in \mathbb{Z}\rbrace$, where $\tau$ is   the golden ratio $\tau=\frac{1}{2}(1+\sqrt{5})=2\cos{\frac{\pi}{5}}$.  
 For the crystallographic root systems, the classification in terms of Dynkin diagrams essentially follows the one familiar from Lie groups and Lie algebras, as their Weyl groups are precisely the crystallographic Coxeter groups. A mild generalisation to so-called Coxeter-Dynkin diagrams is necessary for the non-crystallographic root systems:
\begin{defn}[Coxeter-Dynkin diagram and Cartan matrix] 
	A graphical representation of the geometric content of a root system is given by \emph{Coxeter-Dynkin diagrams}, where nodes correspond to simple roots, orthogonal roots are not connected, roots at $\frac{\pi}{3}$ have a simple link, and other angles $\frac{\pi}{m}$ have a link with a label $m$. 
The  \emph{Cartan matrix} of a set of simple roots $\alpha_i\in\Delta$ is defined as the matrix
	$A_{ij}=2{(\alpha_i\vert \alpha_j)}/{(\alpha_j\vert \alpha_j)}$.

\end{defn}
For instance, the root system of the icosahedral group $H_3$ has one link labelled by $5$ (via the above relation $\tau=2\cos{\frac{\pi}{5}}$), as does its four-dimensional analogue $H_4$, and the infinite two-dimensional family $I_2(n)$ (the root systems of the symmetry groups of the regular $n$-gons) is labelled by $n$. A host of examples of diagrams is presented in Table \ref{tab:4} in Section \ref{sec_bin}.

The reflections in the second axiom of the root system generate a reflection group. A Coxeter group is a mathematical abstraction of the concept of a reflection as an involution (squares to the identity) as well as $m$-fold rotations (two successive reflections generate a rotation in the plane defined by the two roots)  in terms of abstract generators. 
\begin{defn}[Coxeter group] A \emph{Coxeter group} is a group generated by a set of involutive generators $s_i, s_j \in S$ subject to relations of the form $(s_is_j)^{m_{ij}}=1$ with $m_{ij}=m_{ji}\ge 2$ for $i\ne j$. 
\end{defn}
The  finite Coxeter groups have a geometric representation where the involutions are precisely realised as reflections at hyperplanes through the origin in a Euclidean vector space $V$, i.e. they are essentially  just the classical reflection groups. In particular, then the abstract generator $s_i$ corresponds to the simple {reflection}
$s_i: \lambda\rightarrow s_i(\lambda)=\lambda - 2\frac{(\lambda|\alpha_i)}{(\alpha_i|\alpha_i)}\alpha_i$
 at the hyperplane perpendicular to the  simple {root } $\alpha_i$.
The action of the Coxeter group is  to permute these root vectors, and its  structure is thus encoded in the collection  $\Phi\in V$ of all such roots, which in turn form a root system.

After explaining the reflection group framework, we now move onto the singularly simple prescription for performing reflections that Clifford algebra affords, in spaces of any dimension and signature.

\section{Versor framework}\label{sec_versor}

\begin{table}
\caption{Versor framework for a unified treatment of the chiral, full,  binary and pinor polyhedral groups, as well as the conformal and modular groups.}
\label{tab:1}       
%
%
\begin{tabular}{p{1.5cm}p{3.3cm}p{6.9cm}}
\hline\noalign{\smallskip}
Continuous group &Discrete subgroup & Multivector action  \\
\hline\noalign{\smallskip}
$SO(n)$&rotational/chiral & $x\rightarrow \tilde{R}xR$\\
$O(n)$&reflection/full & $x\rightarrow \pm\tilde{A}xA$\\
$\Spin(n)$&binary  & spinors $R$ under $(R_1,R_2)\rightarrow R_1R_2$\\
$\Pin(n)$& pinor & pinors $A$ under $(A_1,A_2)\rightarrow A_1A_2$\\
$C(p,q)$& conformal\&modular &  in a space of signature $(p+1, q+1)$ \\
\noalign{\smallskip}\hline\noalign{\smallskip}
\end{tabular}
\end{table}

The geometric product $xy=x\cdot y+x \wedge y$ (where the scalar product is the symmetric part and the exterior product the antisymmetric part) of Geometric/Clifford Algebra \cite{Hestenes1966STA, HestenesSobczyk1984, Hestenes1990NewFound,LasenbyDoran2003GeometricAlgebra} 
 provides a very compact and efficient way of handling reflections in any number of dimensions, and thus by the Cartan-Dieudonn\'e theorem in fact for any orthogonal transformation. For a unit vector $\alpha$, the two terms in the formula for a reflection of a vector $v$ in the hyperplane orthogonal to $\alpha$ from Definition (\ref{DefRootSys}) simplify to the double-sided action of $\alpha$ via the geometric product
	\begin{equation}\label{in2refl}
	  v\rightarrow s_\alpha v=v'=v-2(v|\alpha)\alpha=v-2\frac{1}{2}(v\alpha+\alpha v)\alpha=v-v\alpha^2-\alpha v\alpha=-\alpha v \alpha.
	\end{equation}
This  prescription for reflecting vectors in hyperplanes is remarkably compact; and by compounding reflections one is led to defining a versor as
 a multivector $A=a_1a_2\dots a_k$ which is the product of $k$ unit vectors $a_i$  \cite{Hestenes1990NewFound}.  Versors form a multiplicative group called the versor/pinor group $\Pin$ under the single-sided multiplication with the geometric product, with inverses given by $\tilde{A}A=A\tilde{A}=\pm 1$, where the tilde denotes the reversal of the order of the constituent vectors $\tilde{A}=a_k\dots a_2a_1$, and  the $\pm$-sign defines its parity.
Every orthogonal transformation $\underbar{A}$ of a vector $v$ can thus be expressed via unit versors via
\begin{equation}\label{in2versor}
\underbar{A}: v\rightarrow  v'=\underbar{A}(v)=\pm{\tilde{A}vA}.
\end{equation}
Unit versors are double-covers of the respective orthogonal transformation, as $A$ and $-A$ encode the same transformation. Even versors, called spinors or rotors, form a subgroup of the $\Pin$ group and a double covering of the special orthogonal group, called the $\Spin$ group.
Clifford algebra therefore affords a particularly natural and simple construction of the $\Spin$ groups. 
The versor realisation of the orthogonal group is much simpler than conventional matrix approaches. 
Table \ref{tab:1} summarises the various action mechanisms of multivectors: a rotation (e.g. the continuous group $SO(3)$ or the discrete subgroup the chiral tetrahedral group $T$) are given by double-sided action of a spinor $R$, whilst these spinors themselves form a group under single-sided action/multiplication (e.g. the continuous group $\Spin(3)\sim SU(2)$ or the discrete subgroup the binary tetrahedral group $2T$).
Likewise, a reflection (continuous $O(3)$ or the discrete subgroup the full tetrahedral group the Coxeter group $A_3$) corresponds to sandwiching with the versor $A$, whilst the versors single-sidedly form a multiplicative group (the $\Pin(3)$ group or the discrete analogue, the double cover of $A_3$, which we denote $\Pin(A_3)$). As we shall see later in the CGA setup one uses the fact that the conformal group   $C(p,q)$ is homomorphic to $SO(p+1,q+1)$ to treat translations as well as rotations in a unified versor framework (see Section \ref{sec_CGA}). Thus conformal and modular transformations are carried out by sandwiching with multivectors in a representation space with signature $(p+1,q+1)$.

\begin{equation}\label{in2PA}
  \underbrace{\{1\}}_{\text{1 scalar}} \,\,\ \,\,\,\underbrace{\{e_1, e_2, e_3\}}_{\text{3 vectors}} \,\,\, \,\,\, \underbrace{\{e_1e_2=Ie_3, e_2e_3=Ie_1, e_3e_1=Ie_2\}}_{\text{3 bivectors}} \,\,\, \,\,\, \underbrace{\{I\equiv e_1e_2e_3\}}_{\text{1 trivector}}.
\end{equation}
The Geometric Algebra of three dimensions $\Cl(3)$ is spanned by three orthogonal (thus anticommuting) unit vectors $e_1$, $e_2$ and $e_3$, and contains the three bivectors $e_1e_2$, $e_2e_3$ and $e_3e_1$ that all square to $-1$, as well as the  highest grade object $e_1e_2e_3$   (trivector and pseudoscalar), which also squares to $-1$.

\section{$A_3$: the rotation and spinor groups }\label{sec_spin}

Here we work out a detailed case, for which we choose the tetrahedral group $A_3$ as both the simplest and most salient example. $A_3$ is the smallest out of the polyhedral groups (the octahedral group $B_3$ is twice the order and the icosahedral group $H_3$ is five times the order). However, it is also the only one that does not contain the inversion -- this means that the $\Pin$ double cover does not simply consist of two copies of the $\Spin$ group, as is the case for $B_3$ and $H_3$. This point has led to much confusion in the literature, in particular as regards pure quaternions \cite{MoodyPatera:1993b, Humphreys1990Coxeter, Koca2006F4}, which was cleared up in the context of the induction theorem of the next section in \cite{Dechant2012AGACSE, Dechant2012CoxGA, Dechant2012Induction, Dechant2013Platonic}. 

We choose as our set of simple roots for  $A_3$ $\alpha_1=\frac{1}{\sqrt{2}}(e_2-e_1)$, $\alpha_2=\frac{1}{\sqrt{2}}(e_3-e_2)$ and $\alpha_3=\frac{1}{\sqrt{2}}(e_1+e_2)$. Taking the closure of these simple roots under the simple reflections generated by them generates a root system consisting of 12 roots, which form the vertices of a cuboctahedron. One can then multiply these root vectors together in the Clifford algebra to form higher grade multivectors that encode various different geometric transformations. However, rather than freely multiplying these vectors together, in this section we focus on even products of root vectors, e.g. $\alpha_1\alpha_2=\frac{1}{2}(-1+e_1e_2+e_2e_3+e_3e_1)$ etc. This yields a set of 24 spinors, $\Spin(A_3)$, which is the binary polyhedral group $2T$. It is the even subgroup of the pinor group of order 48 and gives a double cover of the rotational/chiral tetrahedral group of order 12. 

\begin{table}
\caption{Conjugacy classes of the binary tetrahedral group $2T$ with group elements as spinors in the versor framework.}
\label{tab:2}       
%
%
\begin{tabular}{p{2.7cm}p{8.9cm}}
\hline\noalign{\smallskip}
Conjugacy Class & Group elements  \\
\hline\noalign{\smallskip}
$\mathbf{1}$&$1$\\
\hline\noalign{\smallskip}
$\mathbf{1}_-$&$-1$\\
\hline\noalign{\smallskip}
$\mathbf{4}$&
$\frac{1}{2}\left(1-e_1e_2+e_2e_3-e_3e_1\right)$, 
$\frac{1}{2}\left(1-e_1e_2-e_2e_3+e_3e_1\right)$, \\
&$\frac{1}{2}\left(1+e_1e_2-e_2e_3-e_3e_1\right)$, 
$\frac{1}{2}\left(1+e_1e_2+e_2e_3+e_3e_1\right)$\\
\hline\noalign{\smallskip}
$\mathbf{4_-}$&
$-\frac{1}{2}\left(1-e_1e_2+e_2e_3-e_3e_1\right)$, 
$-\frac{1}{2}\left(1-e_1e_2-e_2e_3+e_3e_1\right)$, \\
&$-\frac{1}{2}\left(1+e_1e_2-e_2e_3-e_3e_1\right)$, 
$-\frac{1}{2}\left(1+e_1e_2+e_2e_3+e_3e_1\right)$\\
\hline\noalign{\smallskip}
$\mathbf{4}^{-1}$&
$\frac{1}{2}\left(1+e_1e_2-e_2e_3+e_3e_1\right)$, 
$\frac{1}{2}\left(1+e_1e_2+e_2e_3-e_3e_1\right)$, \\
&$\frac{1}{2}\left(1-e_1e_2+e_2e_3+e_3e_1\right)$, 
$\frac{1}{2}\left(1-e_1e_2-e_2e_3-e_3e_1\right)$\\
\hline\noalign{\smallskip}
$\mathbf{4}^{-1}_-$&
$-\frac{1}{2}\left(1+e_1e_2-e_2e_3+e_3e_1\right)$, 
$-\frac{1}{2}\left(1+e_1e_2+e_2e_3-e_3e_1\right)$, \\
&$-\frac{1}{2}\left(1-e_1e_2+e_2e_3+e_3e_1\right)$,
$-\frac{1}{2}\left(1-e_1e_2-e_2e_3-e_3e_1\right)$\\
\hline\noalign{\smallskip}
$\mathbf{6}$&$\pm e_1e_2$, $\pm e_2e_3$, $\pm e_3e_1$\\
\hline\noalign{\smallskip}
\end{tabular}
\end{table}

With these sets of multivectors one can perform standard group theory calculations, such as finding the inverse $R^{-1}$, and then finding conjugacy classes of elements $R_1$ and $R_2$ related via conjugation $R_2=\tilde{R}R_1{R}$. Table \ref{tab:2} details the set of conjugacy classes for the spinor group, the binary tetrahedral group $2T$. It has 24 elements and 7 conjugacy classes, and calculations in the algebra are very straightforward; standard approaches would have to somehow move from $SO(3)$ rotation matrices to $SU(2)$ matrices for the binary group -- here both are treated in the same framework. 

The elements $1$ and $-1$ are of course in conjugacy classes by themselves ($\mathbf{1}$ and $\mathbf{1_-}$), and then there are four conjugacy classes of order 4. As spinors, these are precisely one class which we denote by $\mathbf{4}$ (e.g. contains $R$), its negative $\mathbf{4}_-$ (contains $-R$), its inverse $\mathbf{4}^{-1}$ (contains $\tilde{R}$) and the negative of its inverse $\mathbf{4}^{-1}_-$ (contains $-\tilde{R}$). The remaining class $\mathbf{6}$ of order 6 consists of the 6 permutations of the unit bivectors, up to sign. This gives the orders of $1$, $1$, $4$, $4$, $4$, $4$, $6$, as expected for $2T$. Since these spinors doubly cover the rotational tetrahedral group $T$ of order $12$, the conjugacy classes of $2T$ collapse down to those of $T$ as follows: $1$ and $-1$ are the same under double-sided action and give the identity in $T$. The conjugacy class of order 4 and its negative give the same rotation, so result in a $\mathbf{4}$ in $T$; likewise its inverse and the negative of the inverse give another $\mathbf{4}^{-1}$ in $T$. The six bivectors in $2T$ when taken up to sign result in a halving of the class, i.e. a conjugacy class of three in $T$. This gives the expected four conjugacy classes of orders $\mathbf{1}$, $\mathbf{3}$, $\mathbf{4}$ and $\mathbf{4}^{-1}$ for the chiral tetrahedral group $T$, as shown in Table \ref{tab:3}.

\begin{table}
\caption{Conjugacy classes of the chiral tetrahedral group $T$ as performed via double-sided action of the spinors (hence $\pm$-signs encode the same rotation).}
\label{tab:3}       
%
%
\begin{tabular}{p{2.7cm}p{8.9cm}}
\hline\noalign{\smallskip}
Conjugacy Class & Distinct rotations given by two spinors each ($\pm$)\\
\hline\noalign{\smallskip}
$\mathbf{1}$&$\pm 1$\\
\hline\noalign{\smallskip}
$\mathbf{4}$&
$\pm\frac{1}{2}\left(1-e_1e_2+e_2e_3-e_3e_1\right)$, 
$\pm\frac{1}{2}\left(1-e_1e_2-e_2e_3+e_3e_1\right)$, \\
&$\pm\frac{1}{2}\left(1+e_1e_2-e_2e_3-e_3e_1\right)$, 
$\pm\frac{1}{2}\left(1+e_1e_2+e_2e_3+e_3e_1\right)$\\
\hline\noalign{\smallskip}
$\mathbf{4}^{-1}$&
$\pm\frac{1}{2}\left(1+e_1e_2-e_2e_3+e_3e_1\right)$, 
$\pm\frac{1}{2}\left(1+e_1e_2+e_2e_3-e_3e_1\right)$, \\
&$\pm\frac{1}{2}\left(1-e_1e_2+e_2e_3+e_3e_1\right)$, 
$\pm\frac{1}{2}\left(1-e_1e_2-e_2e_3-e_3e_1\right)$\\
\hline\noalign{\smallskip}
$\mathbf{3}$&$\pm e_1e_2$, $\pm e_2e_3$, $\pm e_3e_1$\\
\hline\noalign{\smallskip}
\end{tabular}
\end{table}

Defining the exponential of a unit bivector $B$ in the usual way via a power series expansion results in the spinor $\exp(B\theta)=\cos \theta + B \sin \theta$, in analogy to de Moivre's formula, since $B^2=-1$ (this actually gives a rotation by $2\theta$). We can thus try to reexpress the spinors above as bivector exponentials, where the bivectors give the rotation planes (since two vectors define a plane) and play the role of the usual unit imaginary $i$ in these. This thus gives $n$-fold rotations via a spinor $\exp(\pi B/n)$.

One finds that 16 group elements can be written as $\pm \exp(\pi B/3)$ for the 8 permutations of $\sqrt{3}B=\pm e_1e_2\pm e_2e_3\pm e_3e_1$. These spinors are precisely those giving  3-fold rotations, i.e. 
the bivectors give the four planes of 3-fold symmetry of the cuboctahedron and tetrahedron  (vertices and faces). The 4 types of spinors $\pm \exp(\pm\pi B/3)$ for four different bivectors therefore correspond to the four $\mathbf{4}$ conjugacy classes above, since the two $\pm$ signs encode $-R$ and $\tilde{R}$, respectively. 
The six bivectors $\pm e_1e_2 = \pm\exp(e_1e_2 \pi/2)=\pm\exp(-e_1e_2 \pi/2)$, $\pm e_2e_3$, $\pm e_3e_1$ generate 2-fold rotations (by $\pi$, irrespective of whether $\theta$ is $\pm \pi/2$), which are the three (orthogonal) rotation planes with 2-fold symmetry of the tetrahedron (edges). 
This accounts for 16 and 6 elements, and including the identity and minus the identity one has 24 in total, as expected.

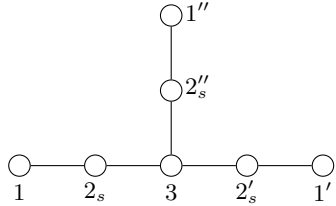
\begin{figure}
	\begin{center}
\begin{tikzpicture}[scale=0.5,
knoten/.style={        circle,      inner sep=.1cm,        draw}
]
\node at (-1,.7) (knoten0) [knoten,  color=white!0!black] {};
\node at  (1,.7) (knoten1) [knoten,  color=white!0!black] {};
\node at  (3,.7) (knoten2) [knoten,  color=white!0!black] {};
\node at  (5,.7) (knoten3) [knoten,  color=white!0!black] {};
\node at  (7,.7) (knoten4) [knoten,  color=white!0!black] {};

\node at  (3,2.7) (knoten5) [knoten,  color=white!0!black] {};
\node at (3,4.7) (knoten6) [knoten,  color=white!0!black] {};

\node at  (-1,0) (alpha0)  {$1$};
\node at  (1,0)  (alpha1) {$2_s$};
\node at  (3,0)  (alpha2) {$3$};
\node at  (5,0)  (alpha3) {$2_s'$};
\node at  (7,0)  (alpha4) {$1'$};

\node at (3.7,2.8)  (alpha5) {$2_s''$};
\node at (3.7,4.8) (alpha8) {$1''$};

\path  (knoten0) edge (knoten1);
\path  (knoten1) edge (knoten2);
\path  (knoten2) edge (knoten3);
\path  (knoten3) edge (knoten4);
\path  (knoten2) edge (knoten5);
\path  (knoten5) edge (knoten6);

\end{tikzpicture} 
\end{center}
\caption[$E_6^+$]{The graph depicting the tensor product structure of the binary tetrahedral group $2T$ is the same as the Dynkin diagram for the  affine extension of $E_6$, $E_6^+$. }
\label{figE6aff}
\end{figure}

The four conjugacy classes of $T$ of order $12$ determine that this group has four irreducible representations of dimensions  $1$, $1'$, ${1}''$ and $3$ (since the sum of the dimensions squared gives the order of the group $\sum d_i^2=|G|$), and the seven conjugacy classes of $2T$ of order $24$ mean this acquires a further three irreducible spinorial representations of dimensions $2_s$, $2_s'$, ${2_s}''$. The binary tetrahedral group has a curious connection with the affine Lie algebra $E_6^+$ (and likewise for the other binary polyhedral groups and the affine Lie algebras of $ADE$-type) via the so-called McKay correspondence, which is twofold: 
 We can define a graph by assigning a node to each irreducible representation of the binary tetrahedral group with the following rule for connecting nodes: each node corresponding to a certain irreducible representation is connected to the nodes corresponding to those irreducible representations that are contained in its tensor product with the irrep $2_s$. For instance, tensoring the trivial representation $1$ with $2_s$ trivially gives $2_s$ and thus the only link  $1$ has is with $2_s$; $2_s\otimes 2_s=1+3$, such that $2_s$ is connected to $1$ and $3$, etc. The graph that is built up in this way is precisely the Dynkin diagram of affine $E_6$, as shown in Figure \ref{figE6aff}. The second connection is the following observation: the Coxeter element is the product of all the simple reflections $\alpha_1\dots \alpha_6$ and its order, the Coxeter number $h$, is $12$ for $E_6$. This also happens to be the sum of the dimensions of the irreducible representations of $2T$, $\sum d_i$. This extends to all other cases of polyhedral groups and $ADE$-type affine Lie algebras.

\section{A 3D spinorial view of 4D exceptional phenomena}\label{sec_bin}
On this curious note, we make a profound observation: the 3D spinors generated by a 3D root system can be used to construct from any given 3D root system a root system in 4 dimensions in the following way.
\begin{prop}[$O(4)$-structure of spinors]\label{HGA_O4}
The space of $\Cl(3)$-spinors $R=a_0+a_1Ie_1+a_2Ie_2+a_3Ie_3$ can be endowed with a \emph{4D Euclidean norm} $|R|^2=R\tilde{R}=a_0^2+a_1^2+a_2^2+a_3^2$ induced by the  \emph{inner product} $(R_1,R_2)=\frac{1}{2}(R_1\tilde{R}_2+R_2\tilde{R}_1)$ between  two spinors $R_1$ and $R_2$. 
\end{prop}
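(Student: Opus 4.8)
The plan is to verify the proposition by a direct computation within the even subalgebra $\Cl^+(3)$, exploiting the fact that the scalar $1$ together with the three bivectors $Ie_1=e_2e_3$, $Ie_2=e_3e_1$, $Ie_3=e_1e_2$ furnish a quaternionic structure. First I would record the action of reversal on an even element: since reversal fixes scalars and reverses the order of the two vectors in each bivector, one has $\widetilde{e_ie_j}=e_je_i=-e_ie_j$, so that $\tilde{R}=a_0-a_1Ie_1-a_2Ie_2-a_3Ie_3$; that is, reversal negates the bivector part while leaving the scalar part untouched.

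The second step is to establish the multiplication rules among the three bivectors. A short calculation shows that each squares to $-1$, e.g. $(Ie_1)^2=(e_2e_3)(e_2e_3)=-1$, and that distinct bivectors anticommute, e.g. $(Ie_1)(Ie_2)=-(Ie_2)(Ie_1)=-Ie_3$. Thus $Ie_1,Ie_2,Ie_3$ obey exactly the relations of the quaternion units $i,j,k$, and in particular the product of two distinct bivectors is again a bivector (carries no scalar part), while the square of any one contributes $-1$ to the scalar part.

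With these rules in hand, the norm follows immediately. Writing $R=a_0+\mathbf{B}$ with $\mathbf{B}=a_1Ie_1+a_2Ie_2+a_3Ie_3$ the bivector part, one has $\tilde{R}=a_0-\mathbf{B}$, so $R\tilde{R}=a_0^2-\mathbf{B}^2$ because the scalar $a_0$ commutes through. Expanding $\mathbf{B}^2$, the cross terms cancel in pairs by anticommutativity and the diagonal terms give $\mathbf{B}^2=-(a_1^2+a_2^2+a_3^2)$, whence $R\tilde{R}=a_0^2+a_1^2+a_2^2+a_3^2$, the claimed 4D Euclidean norm.

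It remains to check that the bilinear form $(R_1,R_2)=\frac{1}{2}(R_1\tilde{R}_2+R_2\tilde{R}_1)$ is a genuine scalar-valued, symmetric inner product inducing this norm. Symmetry is manifest from the definition, and setting $R_1=R_2=R$ recovers $(R,R)=R\tilde{R}=|R|^2$. The one point worth care is that the expression is actually scalar-valued: the product $R_1\tilde{R}_2$ of two even elements is again even (scalar plus bivector), and since reversal is an anti-automorphism we have $\widetilde{R_1\tilde{R}_2}=R_2\tilde{R}_1$, so the symmetrised combination equals $2\langle R_1\tilde{R}_2\rangle_0$, twice the scalar part, hence a real number. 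Computing that scalar part with the quaternionic rules above gives $(R_1,R_2)=a_0b_0+a_1b_1+a_2b_2+a_3b_3$ for $R_2=b_0+\sum_i b_iIe_i$, the standard Euclidean inner product on the coefficient $4$-tuples. The main obstacle is therefore conceptual rather than computational: the work is entirely in recognising the quaternionic $i,j,k$ structure of the three bivectors under the geometric product, after which both the norm and the polarised inner product drop out automatically.
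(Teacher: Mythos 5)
Your proof is correct, and it is exactly the verification the paper takes for granted: the proposition is stated there without an explicit proof, the content being precisely the direct computation $R\tilde{R}=a_0^2+a_1^2+a_2^2+a_3^2$ via reversion and the bivector relations, with the polarised form $\frac{1}{2}(R_1\tilde{R}_2+R_2\tilde{R}_1)$ being scalar-valued because it is the reversion-symmetric part of an even multivector. The only cosmetic point is the sign convention in your quaternionic identification (with your choices $(Ie_1)(Ie_2)=-Ie_3$, so the bivectors realise the opposite of the usual $i,j,k$ relations), but since your argument only uses the squares and anticommutation, nothing is affected.
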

This allows one to check that the group of 3D spinors generated from a 3D root system can be interpreted as a set of 4D vectors, which in fact surprisingly satisfies the axioms of a root system as given in Definition \ref{DefRootSys}. 
\begin{thm}[Induction Theorem]\label{HGA_4Drootsys}
Any 3D root system gives rise to a spinor group $G$ which induces a root system in 4D.
\end{thm}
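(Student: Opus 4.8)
The plan is to verify directly that the finite set $G$ of unit spinors, viewed through Proposition \ref{HGA_O4} as points of the Euclidean $4$-space with coordinates $(a_0,a_1,a_2,a_3)$, satisfies the two axioms of Definition \ref{DefRootSys}. First I would fix the object: starting from the $3$D root system $\Phi$, form $G=\Spin(\Phi)$ as the group generated under the geometric product by all even products $\alpha\beta$ of roots $\alpha,\beta\in\Phi$. Each element $R=a_0+a_1Ie_1+a_2Ie_2+a_3Ie_3$ is a unit spinor, $R\tilde{R}=1$, so $G$ lies on the unit sphere of the $4$D inner-product space of Proposition \ref{HGA_O4}, and it is finite because $\Phi$ is.

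For the first axiom, note that every element of $G$ has norm $1$; hence if $cR\in G$ for a scalar $c$ and $R\in G$, then $|c|=1$, i.e. $c=\pm1$, so the only scalar multiples of $R$ present are $\pm R$. It then remains to check $-1\in G$, which guarantees $-R\in G$ whenever $R\in G$. This I would obtain from the de Moivre identity for spinors: for two simple roots meeting at angle $\pi/m$ one has $\alpha\beta=\cos(\pi/m)+\hat{B}\sin(\pi/m)$ with a unit bivector $\hat{B}$ satisfying $\hat{B}^2=-1$, so that $(\alpha\beta)^m=\cos\pi+\hat{B}\sin\pi=-1$.

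The substantial step is the second axiom, closure under reflections. Since all spinors are unit vectors, the reflection formula of Definition \ref{DefRootSys} reduces to $s_{R_2}(R_1)=R_1-2(R_1\vert R_2)R_2$; substituting the inner product $(R_1\vert R_2)=\frac{1}{2}(R_1\tilde{R}_2+R_2\tilde{R}_1)$ of Proposition \ref{HGA_O4} and using $\tilde{R}_2R_2=1$, I expect the scalar-plus-bivector algebra to collapse the expression to the pure Clifford sandwich
\[
s_{R_2}(R_1)=-R_2\tilde{R}_1R_2.
\]
Once this identity is in hand the result is immediate: $G$ is a group, so $\tilde{R}_1=R_1^{-1}\in G$, the triple product $R_2\tilde{R}_1R_2\in G$, and with $-1\in G$ we conclude $s_{R_2}(R_1)\in G$. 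Thus $G$ is invariant under all of its own reflections, and together with the first axiom this exhibits $G$ as a $4$D root system.

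The main obstacle is precisely the algebraic identity $s_{R_2}(R_1)=-R_2\tilde{R}_1R_2$: it is the bridge turning the abstract $4$D Euclidean reflection into a sandwiching operation internal to the $3$D spinor group, after which everything follows formally from the group axioms. A subsidiary point to treat carefully is that $(R_1\vert R_2)$ really is a scalar, so that the reflection formula makes sense as a genuine $4$D reflection; this follows because $R_1\tilde{R}_2+R_2\tilde{R}_1$ is reversal-symmetric and hence equals twice the grade-$0$ part $\langle R_1\tilde{R}_2\rangle_0$, the bivector contributions cancelling.
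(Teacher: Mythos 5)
Your proposal is correct and follows essentially the same route as the paper's own proof: both verify the two root-system axioms directly, getting axiom 1 from unit normalisation together with closure of $G$ under negation, and axiom 2 from the identity expressing the 4D reflection as a Clifford sandwich (you write $s_{R_2}(R_1)=-R_2\tilde{R}_1R_2$, the paper the equivalent $-R_1\tilde{R}_2R_1$ with the roles swapped), after which group closure under multiplication, reversion and negation finishes the argument. The only detail to adjust is your de Moivre justification of $-1\in G$: under the standard convention simple roots meet at the obtuse angle $\pi-\pi/m$, so $(\alpha\beta)^m=(-1)^{m-1}$, which is $+1$ for odd $m$; this is harmless, since $-1=\alpha(-\alpha)\in G$ directly (root systems contain $-\alpha$), or one may take the pair $\alpha,-\beta$, which does meet at angle $\pi/m$.
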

\begin{proof}
Check the two axioms for the root system $\Phi$ consisting of the set of 4D vectors given by the 3D spinor group through Proposition \ref{HGA_O4}.
\begin{enumerate}
\item By construction, $\Phi$ contains the negative of a root $R$ since spinors provide a double cover of rotations, i.e. if $R$ is in a spinor group $G$, then so is $-R$ , but no other scalar multiples (normalisation to unity). 
\item $\Phi$ is invariant under all reflections with respect to the inner product $(R_1,R_2)$ in Proposition \ref{HGA_O4} since $R_2'=R_2-2(R_1, R_2)/(R_1, {R}_1) R_1=-R_1\tilde{R}_2R_1\in G$ for $R_1, R_2 \in G$ by the closure property of the group $G$ (in particular $-R$ and $\tilde{R}$ are in $G$ if $R$ is). 
\end{enumerate}
\end{proof}
Since the number of irreducible 3D root systems is limited to the trinity $(A_3, B_3, H_3)$ \cite{Arnold2000AMS,Arnold1999symplectization}, this yields a definite list of induced root systems in 4D -- this turns out to be the trinity $(D_4, F_4, H_4)$, which are exactly the exceptional root systems in 4D. We note in passing that Arnold's original link between these two trinities is extremely convoluted, and that our contruction presents a novel and direct link between the two.

\begin{table}
\caption{Clifford spinor construction and McKay correspondence: connections between 3D, 4D and $ADE$-type root systems. $|\Phi|$, $\sum d_i$ and $h$ are $6$, $12$, $18$ and $30$, respectively.}
\label{tab:4}       
%
%
\begin{tabular}{p{0.4cm}p{3cm}p{0.4cm}p{3cm}p{6.9cm}}
\hline\noalign{\smallskip}
&3D group &&4D group & affine Lie algebra  \\
\hline\noalign{\smallskip}
$A_1^3$ &\includegraphics[width=2cm]{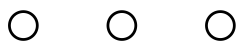}&$A_1^4$ &\includegraphics[width=2cm]{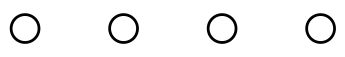}& $D_4^+$ \includegraphics[width=2cm]{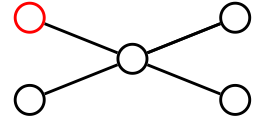}\\
\hline\noalign{\smallskip}
$A_3$ &\includegraphics[width=2cm]{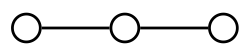}&$D_4$ &\includegraphics[width=2cm]{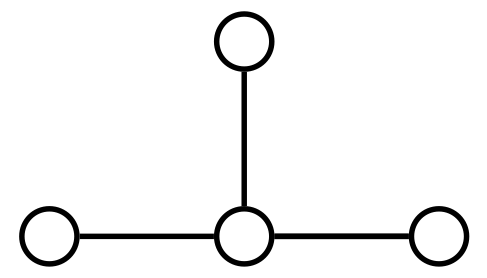}& $E_6^+$ \includegraphics[width=2cm]{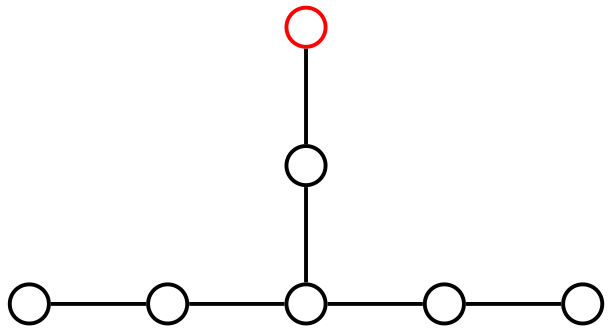}\\
\hline\noalign{\smallskip}
$B_3$ &\includegraphics[width=2cm]{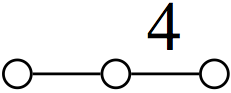}&$F_4$ &\includegraphics[width=2cm]{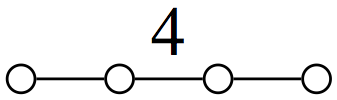} & $E_7^+$ \includegraphics[width=3cm]{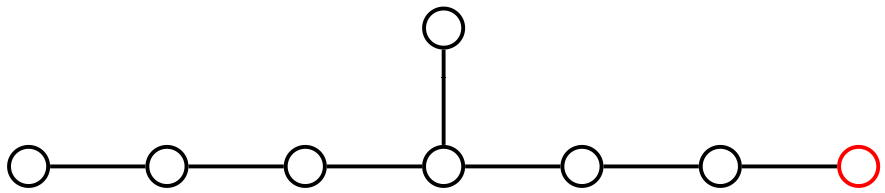}\\
\hline\noalign{\smallskip}
$H_3$ &\includegraphics[width=2cm]{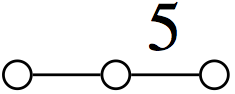}&$H_4$ &\includegraphics[width=2cm]{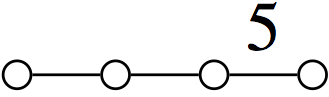} & $E_8^+$ \includegraphics[width=4cm]{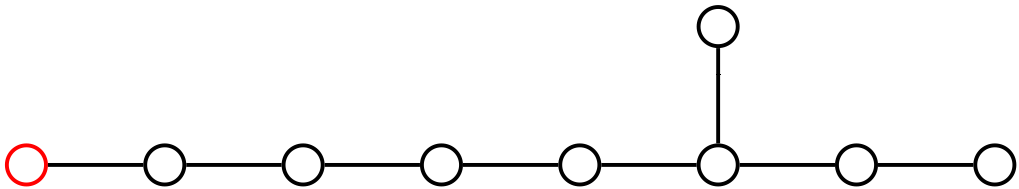}\\
\hline\noalign{\smallskip}
\end{tabular}
\end{table}

Not only is the existence of these root systems exceptional (i.e. they don't have counterparts in other dimensions), but their automorphism groups are also highly unusual. The 3D spinorial interpretation also explains these straightforwardly by the following theorem.
\begin{thm}[Spinorial symmetries]\label{HGAsymmetry}
A root system induced via the Clifford spinor construction has an automorphism group that contains two factors of the respective spinor group $G$ acting from the left and from the right.
\end{thm}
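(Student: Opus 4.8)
The plan is to produce the two factors as the left- and right-translation actions of the spinor group on itself, and to check that each translation is an automorphism of the induced 4D root system $\Phi$. For a fixed $S\in G$ I would define the maps $L_S\colon R\mapsto SR$ and $\rho_S\colon R\mapsto RS$. Since multiplying a fixed even element onto a spinor keeps it inside the even subalgebra $\Cl^+(3)$ -- which is precisely the 4D space carrying the Euclidean structure of Proposition \ref{HGA_O4} -- both $L_S$ and $\rho_S$ are real-linear self-maps of that 4D space.

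The crucial step is to verify that these maps are orthogonal for the inner product $(R_1,R_2)=\frac{1}{2}(R_1\tilde R_2+R_2\tilde R_1)$. Here I would use that every $S\in G$ is a unit spinor, so $S\tilde S=\tilde S S=1$, together with the fact that $R\tilde R=|R|^2$ is a scalar and hence commutes with everything. Then $|SR|^2=SR\widetilde{SR}=S(R\tilde R)\tilde S=(R\tilde R)S\tilde S=|R|^2$ and, identically, $|RS|^2=R(S\tilde S)\tilde R=|R|^2$. Thus $L_S$ and $\rho_S$ preserve the 4D norm, and by polarisation the full inner product, so each lies in $O(4)$.

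Next I would observe that $L_S$ and $\rho_S$ map $\Phi$ to itself, which is immediate from the closure already exploited in Theorem \ref{HGA_4Drootsys}: as a set of 4D vectors $\Phi$ is exactly the group $G$, so $SR\in G$ and $RS\in G$ whenever $R\in G$, with inverse maps $L_{\tilde S}$ and $\rho_{\tilde S}$. Hence each $L_S$ and each $\rho_S$ is a bijective orthogonal map stabilising $\Phi$, i.e. an element of $\Aut(\Phi)$. The assignments $S\mapsto L_S$ and $S\mapsto\rho_{\tilde S}$ are injective homomorphisms $G\to\Aut(\Phi)$, giving two copies of $G$; associativity of the geometric product yields $L_S\rho_T=\rho_T L_S$, so the two factors commute, and one checks they meet only in the central $\{\pm\mathrm{id}\}$ coming from $\pm1\in G$.

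I expect the computations to be light; the genuinely noteworthy point -- rather than an obstacle -- is that \emph{both} one-sided multiplications preserve $\Phi$, so that $\Aut(\Phi)$ acquires two independent factors of $G$ rather than one, and this doubling is exactly what accounts for the unusually large automorphism groups of the induced root systems noted above. The only thing to handle with care is that $L_S$ and $\rho_S$ must be read as linear maps of the \emph{same} 4D Euclidean space on which $\Phi$ was realised in Proposition \ref{HGA_O4}, and that ``automorphism'' is taken in the usual root-system sense of an orthogonal map stabilising the root set; once the orthogonality identities $|SR|^2=|RS|^2=|R|^2$ are in hand, the result follows entirely from the group closure of $G$ established in Theorem \ref{HGA_4Drootsys}.
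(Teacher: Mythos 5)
Your proposal is correct, and it is precisely the argument the paper intends: the paper states Theorem \ref{HGAsymmetry} without writing out a proof, but the surrounding discussion (and the proof of Theorem \ref{HGA_4Drootsys}) makes clear that the intended reasoning is exactly yours, namely that one-sided multiplications $R\mapsto SR$ and $R\mapsto RS$ by unit spinors $S\in G$ preserve the norm $R\tilde R$ of Proposition \ref{HGA_O4} and stabilise $\Phi=G$ by group closure. Your additional care with the anti-homomorphism issue (using $S\mapsto\rho_{\tilde S}$) and the identification of the overlap $\{\pm\mathrm{id}\}$ of the two factors goes slightly beyond what the paper records and is a welcome sharpening.
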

We noted earlier that the spinor groups and the $ADE$-type affine Lie algebras are connected via the McKay correspondence, for instance the binary polyhedral groups $(2T, 2O, 2I)$ and the Lie algebras $(E_6, E_7, E_8)$ -- for these $(12, 18, 30)$ is both the  Coxeter number of the respective Lie algebra and the sum of the dimensions of the irreducible representation of the polyhedral group.
However, the connection between $(A_3, B_3, H_3)$ and $(E_6, E_7, E_8)$ does not seem to be known. In particular, we note that $(12, 18, 30)$ is exactly the number of roots $\Phi$ in the 3D root systems $(A_3, B_3, H_3)$. Our construction therefore makes deep connections between trinities, and puts the McKay correspondence in a wider framework, as shown in Table \ref{tab:4}.

\section{$A_3$: the reflection and pinor groups }\label{sec_pin}

%
\begin{table}
\caption{Conjugacy classes of the pinor tetrahedral group with group elements as pinors in the versor framework.}
\label{tab:5}       
%
%
\begin{tabular}{p{2.8cm}p{8.9cm}}
\hline\noalign{\smallskip}
Conjugacy Class & Group elements  \\
\hline\noalign{\smallskip}
$\mathbf{1}$&$1$\\
\hline\noalign{\smallskip}
$\mathbf{1}_-$&$-1$\\
\hline\noalign{\smallskip}
$\mathbf{8}_+$&
$\frac{1}{2}\left(1-e_1e_2+e_2e_3-e_3e_1\right)$, 
$\frac{1}{2}\left(1-e_1e_2-e_2e_3+e_3e_1\right)$, \\
&$\frac{1}{2}\left(1+e_1e_2-e_2e_3-e_3e_1\right)$, 
$\frac{1}{2}\left(1+e_1e_2+e_2e_3+e_3e_1\right)$,\\
&$\frac{1}{2}\left(1+e_1e_2-e_2e_3+e_3e_1\right)$, 
$\frac{1}{2}\left(1+e_1e_2+e_2e_3-e_3e_1\right)$, \\
&$\frac{1}{2}\left(1-e_1e_2+e_2e_3+e_3e_1\right)$, 
$\frac{1}{2}\left(1-e_1e_2-e_2e_3-e_3e_1\right)$\\
\hline\noalign{\smallskip}
$\mathbf{8}_-$&
$-\frac{1}{2}\left(1-e_1e_2+e_2e_3-e_3e_1\right)$, 
$-\frac{1}{2}\left(1-e_1e_2-e_2e_3+e_3e_1\right)$, \\
&$-\frac{1}{2}\left(1+e_1e_2-e_2e_3-e_3e_1\right)$, 
$-\frac{1}{2}\left(1+e_1e_2+e_2e_3+e_3e_1\right)$,\\
&$-\frac{1}{2}\left(1+e_1e_2-e_2e_3+e_3e_1\right)$, 
$-\frac{1}{2}\left(1+e_1e_2+e_2e_3-e_3e_1\right)$, \\
&$-\frac{1}{2}\left(1-e_1e_2+e_2e_3+e_3e_1\right)$,
$-\frac{1}{2}\left(1-e_1e_2-e_2e_3-e_3e_1\right)$\\
\hline\noalign{\smallskip}
$\mathbf{6}$&$\pm e_1e_2$, $\pm e_2e_3$, $\pm e_3e_1$\\
\hline\noalign{\smallskip}
$\mathbf{12}$&$\frac{1}{\sqrt{2}}(\pm e_1\pm e_2)$, $\frac{1}{\sqrt{2}}(\pm e_2\pm e_3)$, $\frac{1}{\sqrt{2}}(\pm e_3 \pm e_1)$\\
\hline\noalign{\smallskip}
$\mathbf{6}_+$&$\frac{1}{\sqrt{2}}(I\pm e_1)$, $\frac{1}{\sqrt{2}}(I\pm e_2)$, $\frac{1}{\sqrt{2}}(I\pm e_3)$\\
\hline\noalign{\smallskip}
$\mathbf{6}_-$&$-\frac{1}{\sqrt{2}}(I\pm e_1)$, $-\frac{1}{\sqrt{2}}(I\pm e_2)$, $-\frac{1}{\sqrt{2}}(I\pm e_3)$\\
\hline\noalign{\smallskip}
\end{tabular}
\end{table}

In this section we return to the versor description of the tetrahedral groups, and turn from the spinor to the pinor case, i.e. including reflections. 
Free multiplication of the root vectors results in a pinor group $\Pin(A_3)$ of order 48, which is a double cover of the Coxeter group $A_3$ of reflection symmetries of the tetrahedron of order 24. The 8 conjugacy classes of $\Pin(A_3)$ are listed in Table \ref{tab:2}. 

The elements $1$ and $-1$ are again in conjugacy classes by themselves ($\mathbf{1}$ and $\mathbf{1}_-$), and the $\Spin(A_3)=2T$  class $\mathbf{6}$ of bivectors is unaffected and remains a class in  $\Pin(A_3)$.   There are two conjugacy classes of order 8, $\mathbf{8}_+$ and $\mathbf{8}_-$, that contain the merged $2T$ classes $\mathbf{4}$ and $\mathbf{4}^{-1}$, and $\mathbf{4}_-$ merged with $\mathbf{4}^{-1}_-$. The new pinor elements are a set of 12 pure vectors, which are precisely the 12 root vectors, which form a conjugacy class by themselves  $\mathbf{12}$. 
The remaining 12 elements come as a pair of $\mathbf{6}_+$ and $\mathbf{6}_-$ that contain vector and trivector components and that are the negatives of one another. 
As we said earlier, since $A_3$ does not contain the inversion, the pin group does not simply consist of two copies of the spin group $\Spin(A_3)$ (e.g. unlike for $B_3$ and $H_3$). 
The resulting  orders are hence $1$, $1$, $6$, $6$, $6$, $8$, $8$ and $12$. Since these pinors doubly cover the full tetrahedral group $A_3$ of order $24$ via sandwiching, the conjugacy classes of $\Pin(A_3)$ collapse down to those of $A_3$ as follows: $1$ and $-1$ are the same under double-sided action and give the identity in $A_3$. The six bivectors in $2T$ are again halved, i.e. a $\mathbf{3}$ in $A_3$.  The two conjugacy classes of order 8 give the same rotations, so result in an $\mathbf{8}$ in $A_3$; likewise the remaining two $\mathbf{6}_+$ and $\mathbf{6}_-$ give a $\mathbf{6}_\pm$ in $A_3$, whilst the 12 root vectors $\mathbf{12}$ only give 6 different reflections, i.e. another  $\mathbf{6}$. This gives the expected five $A_3$ conjugacy classes  $\mathbf{1}$, $\mathbf{3}$, $\mathbf{6}$, $\mathbf{6}_\pm$ and $\mathbf{8}$.

\section{The conformal group and the modular group}\label{sec_CGA}

The versor formalism is particularly powerful in the CGA approach \cite{HestenesSobczyk1984, LasenbyDoran2003GeometricAlgebra,Dechant2011Thesis}, as translations and other conformal transformations can also be handled multiplicatively as versors, which is interesting for lattices and quasicrystals \cite{Hestenes2002PointGroups,Hestenes2002CrystGroups, Hitzer2010CLUCalc, Dechant2012AGACSE, DechantTwarockBoehm2011E8A4, DechantTwarockBoehm2011H3aff}. The conformal space of signature $(+,+,+,-)$ is given by adding two orthogonal unit vectors $e$ and $\bar{e}$ to the algebra of the plane \cite{HestenesSobczyk1984}. It is therefore spanned by the unit vectors 
$    e_1, e_2,  e, \bar{e}, \text{ with } e_i^2=1, e^2=1, \bar{e}^2=-1,  n := e +\bar{e}, \,\,\,  \bar{n}:= e -\bar{e}, \,\,\, N:= e\bar{e}.
$ The 2D vector $x=x_1e_1+x_2e_2$ is then represented as a  null vector in the conformal space by defining 
 $ X\equiv F(x):= x^2 n+2 x-\bar{n}$ 
, and we enforce the normalisation $X\cdot n = -1$.
A null $X$ allows for a homogeneous (projective) representation of points. 
The 2D conformal group is 6-dimensional and consists of 2 translations $x\rightarrow x+a$ by a vector $a$ which are given by a spinor 
$T_a=\exp\left(\frac{na}{2}\right)=1+\frac{na}{2}$ via $T_a F(x) \tilde{T}_a=F(x+a)$, the usual rotation in the plane $R F(x) \tilde{R}=F(Rx\tilde{R})$, a dilation generated by $D_\alpha=\exp\left(\frac{\alpha N}{2}\right)=\cosh \frac{\alpha}{2}+\sinh\frac{\alpha}{2}N$ and two special conformal transformations $K_a=eT_ae=1-\frac{\bar{n}a}{2}$, where we can just take $a$ to be $e_1$ and $e_2$ for simplicity; reflections in a (spatial) vector $a$ are not needed in the connected component of the conformal group as such, but are of course the fundamental building block given by $-aF(x)a$, and inversions are given by reflecting in $e$, $-eF(x)e$. 

Clifford analysis does not appear to  have been applied to the particularly relevant area of 2D conformal field theory yet, which is the most important in the context of string theory, and has instead focused on general features in arbitrary dimensions rather than low-dimensional accidental properties,  such as the fact that the 2D conformal algebra is infinite which makes (anti)holomorphic functions a central concept in 2D CFT. For the usual definitions of $z=x+Iy$ and $\frac{\partial}{\partial z}=\frac{1}{2}\left(\frac{\partial}{\partial x}-I\frac{\partial}{\partial y}\right)$ in Clifford analysis and defining the vector fields $L_n=-z^{n+1}\frac{\partial}{\partial z}$ one immediately recovers the Witt algebra $\left[L_m, L_n\right]=(m-n)L_{m+n}$. In string theory, its central extension the Virasoro algebra $\left[L_m, L_n\right]=(m-n)L_{m+n}+\frac{c}{12}\delta_{m+n,0}m(m^2-1)$ is of central importance. 

In the context of strings propagating on a torus, or for the string one-loop scattering amplitude, modularity is important. The modular group is a subgroup of the conformal group. Given the complex structure of a torus $\tau$  one can embed the torus as a parallelogram tiling of the complex plane, with periodicity conditions across opposite edges, resulting in modularity in the complex plane: the torus does not change under wrapping around a circular direction, corresponding to a symmetry generator $T:\tau\rightarrow \tau+1$, nor under swapping the two circular directions, with the generator $S:\tau\rightarrow -1/\tau$. $S$ and $T$ generate the modular group; its  defining relations are $\langle S,T | S^2=I, (ST)^3=I\rangle$, i.e. it is the triangle group $(2,3,\infty)$.
 We are particularly interested in Moonshine phenomena, which are  correspondences between the two very different areas of finite simple groups and modular forms; e.g. monstrous moonshine  \cite{gannon2006moonshine}. The dimensions of the smallest irreducible representations of the Monster group $M$ are $1, 196883, 21296876, \dots$. Since the torus is periodic in two directions, one can Fourier expand and get a set of coefficients for any modular function; in particular for the generating function of modular functions, the Klein {$j(\tau)$} modular function, one gets $j(\tau)=q^{-1}+744+196884q+21493760q^2+\dots$, where $q=e^{2\pi i\tau}$. Monstrous moonshine is then the observation that these two seemingly very different areas of mathematics are intimately related via $196884=196883+1$, $21493760=21296876+196883+1$ and so on. The Moonshine phenomenon we are currently interested in is somewhat different \cite{eguchi2011notes,taormina2013twist}: it concerns as the analogue of the modular form (in fact a weak Jacobi form) the {elliptic genus} of an $\mathcal{N}=4$ superconformal field theory compactified on a {$K3$-surface}; the elliptic genus of $K3$ has been known since the 80s and looks very unsurprising, unless one rewrites it in an inspired way in terms of the $\mathcal{N}=4$ characters \cite{Petersen1991modular,Petersen1990characters2,Eguchi1987unitary,Eguchi1988character,Ooguri1992modular} as ${E_{K3}(\tau,z)=-2Ch(0;\tau,z)+20Ch(1/2;\tau,z)+e(q)Ch(\tau, z)}$ where
all the coefficients in the $q$-series ${e(q)=90q+462q^2+1540q^3+4554q^4+11592q^5+\dots}$ are twice the dimension of some irreducible representation of the Matthieu group $M_{24}$ whose   irreps have dimensions $45, 231, 770, 2277, 5796 \dots $). Modularity is therefore very topical, also in other areas \cite{wiles1995modular}, and a Clifford perspective on holomorphic and modular functions could have rather profound consequences. 

In CGA, the vector $x=x_1e_1+x_2e_2$ is related to the complex structure $\tau=x_1+x_2e_1e_2$ via $\tau=e_1x$.
The $T$ generator acting on $x$ is simply given by $Tx=(x_1+1)e_1+x_2e_2$, which is achieved via the conformal spinor $T_X=1+\frac{ne_1}{2}$. $S$ acts as $x\rightarrow\frac{1}{x_1^2+x_2^2}(-x_1e_1+x_2e_2)$. This corresponds to an inversion $x\rightarrow\frac{x}{x^2}$, followed by a reflection in $e_1$, $-e_1F(x)e_1$. Thus, $S$ is given by $S_X=e_1e$ via $S_XF(x)\tilde{S}_X$. Since $e_1ee_1e=-1$ it is  obvious that $S$ satisfies the first defining relation $S^2=I$. It is also easy to show that $(S_XT_X)^3=-1$ in CGA, so the second defining relation is also satisfied, and $ST$ is essentially a 3-fold rotation in conformal space (see Section \ref{sec_bin}). We will explore applications of this conformal representation of the modular group for modular forms in future work.

\section{Conclusions}\label{sec_concl}

In this paper, we have discussed applications of a Clifford algebra framework to groups connected with reflections. We described a versor framework for these, and explored the Coxeter group $A_3$ and the related tetrahedral groups in detail. We presented the induction theorem for 4D root systems and discussed their automorphism groups. We finished with a discussion of reflections in conformal space and extending these to the conformal and modular groups. These could have interesting applications for holomorphic and modular functions in particular for conformal field theory and string theory on the torus, which we will discuss  in further work. 


\subsection*{Acknowledgment}
I would like to thank Reidun Twarock, Anne Taormina, David Hestenes, Anthony Lasenby, Joan Lasenby, Eckhard Hitzer, David Eelbode and Vladimir Soucek. 


\begin{thebibliography}{10}

\bibitem{Arnold1999symplectization}
VI~Arnold.
\newblock Symplectization, complexification and mathematical trinities.
\newblock {\em The Arnoldfest}, pages 23--37, 1999.

\bibitem{Arnold2000AMS}
Vladimir~Igorevich Arnolʹd.
\newblock {\em Mathematics: Frontiers and perspectives}.
\newblock Amer Mathematical Society, 2000.

\bibitem{Dechant2011Thesis}
Pierre-Philippe Dechant.
\newblock {\em Models of the Early Universe}.
\newblock PhD thesis, University of Cambridge, UK, 2011.

\bibitem{Dechant2012CoxGA}
Pierre-Philippe Dechant.
\newblock {Clifford} algebra unveils a surprising geometric significance of
  quaternionic root systems of {Coxeter} groups.
\newblock {\em Advances in Applied Clifford Algebras}, 23(2):301--321, 2013.
\newblock 10.1007/s00006-012-0371-3.

\bibitem{Dechant2013Platonic}
Pierre-Philippe Dechant.
\newblock Platonic solids generate their four-dimensional analogues.
\newblock {\em Acta Crystallographica Section A: Foundations of
  Crystallography}, 69(6):592--602, 2013.

\bibitem{Dechant2012AGACSE}
Pierre-Philippe Dechant.
\newblock A {Clifford} algebraic framework for {Coxeter} group theoretic
  computations.
\newblock {\em Advances in Applied Clifford Algebras}, 24(1):89--108, 2014.

\bibitem{Dechant2012Induction}
Pierre-Philippe Dechant.
\newblock Rank-3 root systems induce root systems of rank 4 via a new
  {Clifford} spinor construction.
\newblock {\em Journal of Physics: Conference Series; Proceedings of Group30},
  2015.

\bibitem{DechantTwarockBoehm2011H3aff}
Pierre-Philippe Dechant, C\'eline Boehm, and Reidun Twarock.
\newblock Novel {Kac}-{Moody}-type affine extensions of non-crystallographic
  {Coxeter} groups.
\newblock {\em Journal of Physics A: Mathematical and Theoretical},
  45(28):285202, 2012.

\bibitem{DechantTwarockBoehm2011E8A4}
Pierre-Philippe Dechant, C\'eline Boehm, and Reidun Twarock.
\newblock Affine extensions of non-crystallographic {Coxeter} groups induced by
  projection.
\newblock {\em Journal of Mathematical Physics}, 54(9):--, 2013.

\bibitem{LasenbyDoran2003GeometricAlgebra}
Chris Doran and Anthony~N. Lasenby.
\newblock {\em Geometric Algebra for Physicists}.
\newblock Cambridge University Press, Cambridge, 2003.

\bibitem{eguchi2011notes}
Tohru Eguchi, Hirosi Ooguri, and Yuji Tachikawa.
\newblock Notes on the {K3} surface and the {Mathieu} group {$M_{24}$}.
\newblock {\em Experimental Mathematics}, 20(1):91--96, 2011.

\bibitem{Eguchi1987unitary}
Tohru Eguchi and Anne Taormina.
\newblock Unitary representations of the $\mathcal{N}=4$ superconformal
  algebra.
\newblock {\em Physics Letters B}, 196(1):75--81, 1987.

\bibitem{Eguchi1988character}
Tohru Eguchi and Anne Taormina.
\newblock Character formulas for the $\mathcal{N}=4$ superconformal algebra.
\newblock {\em Physics Letters B}, 200(3):315--322, 1988.

\bibitem{gannon2006moonshine}
Terry Gannon.
\newblock {\em Moonshine beyond the {Monster}: The bridge connecting algebra,
  modular forms and physics}.
\newblock Cambridge University Press, 2006.

\bibitem{Hestenes1966STA}
David Hestenes.
\newblock {\em Space-Time Algebra}.
\newblock Gordon and Breach, New York, 1966.

\bibitem{Hestenes1990NewFound}
David Hestenes.
\newblock {\em New foundations for classical mechanics; 2nd ed.}
\newblock Fundamental theories of physics. Kluwer, Dordrecht, 1999.

\bibitem{Hestenes2002PointGroups}
David Hestenes.
\newblock {\em {Point Groups and Space Groups in Geometric Algebra}}, pages
  3--34.
\newblock Birkh\"auser, Boston, 2002.

\bibitem{Hestenes2002CrystGroups}
David Hestenes and Jeremy~W. Holt.
\newblock {The Crystallographic Space Groups in Geometric Algebra.}
\newblock {\em Journal of Mathematical Physics}, 48:023514, 2007.

\bibitem{HestenesSobczyk1984}
David Hestenes and Garret Sobczyk.
\newblock {\em {Clifford} algebra to geometric calculus: a unified language for
  mathematics and physics}.
\newblock Fundamental theories of physics. Reidel, Dordrecht, 1984.

\bibitem{Hitzer2010CLUCalc}
Eckhard Hitzer and Christian Perwass.
\newblock Interactive {3D} space group visualization with {CLUCalc} and the
  {Clifford Geometric Algebra} description of space groups.
\newblock {\em Advances in Applied Clifford Algebras}, 20:631--658, 2010.
\newblock 10.1007/s00006-010-0214-z.

\bibitem{Humphreys1990Coxeter}
J.~E. Humphreys.
\newblock {\em Reflection groups and {Coxeter} groups}.
\newblock Cambridge University Press, Cambridge, 1990.

\bibitem{Koca2006F4}
M.~{Koca}, M.~{Al-Barwani}, and R.~{Ko{\c c}}.
\newblock {Quaternionic root systems and subgroups of the Aut(F$_{4}$)}.
\newblock {\em Journal of Mathematical Physics}, 47(4):043507--+, April 2006.

\bibitem{MoodyPatera:1993b}
R.~V. Moody and J.~Patera.
\newblock Quasicrystals and icosians.
\newblock {\em Journal of Physics A: Mathematical and General}, 26(12):2829,
  1993.

\bibitem{Ooguri1992modular}
Hirosi Ooguri, Jens~Lyng Petersen, and Anne Taormina.
\newblock Modular invariant partition functions for the doubly extended
  $\mathcal{N}=4$ superconformal algebras.
\newblock {\em Nuclear Physics B}, 368(3):611--624, 1992.

\bibitem{Petersen1991modular}
Jens~Lyng Petersen and Anne Taormina.
\newblock Modular properties of doubly extended $\mathcal{N}=4$ superconformal
  algebras and their connection to rational torus models (i).
\newblock {\em Nuclear Physics B}, 354(2):689--710, 1991.

\bibitem{Petersen1990characters2}
JL~Petersen and A~Taormina.
\newblock Characters of the $\mathcal{N}=4$ superconformal algebra with two
  central extensions (ii). massless representations.
\newblock {\em Nuclear Physics B}, 333(3):833--854, 1990.

\bibitem{taormina2013twist}
Anne Taormina and Katrin Wendland.
\newblock A twist in the {$M_{24}$} moonshine story.
\newblock {\em arXiv preprint arXiv:1303.3221}, 2013.

\bibitem{wiles1995modular}
Andrew Wiles.
\newblock Modular elliptic curves and {Fermat}'s last theorem.
\newblock {\em Annals of Mathematics}, pages 443--551, 1995.

\end{thebibliography}

\end{document}